\documentclass[11pt,a4paper]{article}

\usepackage[margin=1in]{geometry}
\usepackage{amsmath,amssymb,amsfonts,amsthm}
\usepackage{mathtools}
\usepackage{hyperref}
\usepackage{enumitem}
\usepackage{booktabs}
\usepackage{xcolor}
\usepackage{tikz}
\usepackage{pgfplots}
\usepackage{float}
\usepackage{caption}
\usepackage{setspace}
\usepackage[round]{natbib}

\pgfplotsset{compat=1.18}
\usetikzlibrary{arrows.meta, positioning, calc, patterns}
\theoremstyle{plain}
\newtheorem{theorem}{Theorem}[section]

\newtheorem{proposition}[theorem]{Proposition}
\newtheorem{corollary}[theorem]{Corollary}

\theoremstyle{definition}
\newtheorem{definition}[theorem]{Definition}

\newtheorem{example}[theorem]{Example}

\theoremstyle{remark}
\newtheorem{remark}[theorem]{Remark}

\newcommand{\E}{\mathbb{E}}
\newcommand{\Prob}{\mathbb{P}}

\newcommand{\ind}{\mathbf{1}}
\DeclareMathOperator{\PPV}{PPV}
\DeclareMathOperator{\FPR}{FPR}

\DeclareMathOperator*{\esssup}{ess\,sup}

\newcommand{\Lreq}{\Lambda_{\mathrm{req}}}
\newcommand{\Lavail}{\Lambda_{\mathrm{avail}}}
\newcommand{\Lach}{\Lambda_{\mathrm{ach}}}
\newcommand{\tgt}{\tau}

\begin{document}

\title{When AAA Satisfies Nothing\\[0.3em]
Impossibility Theorems for Structured Credit Ratings}

\author{
Marco Pollanen\\
Department of Mathematics and Statistics\\
Trent University\\
Peterborough, ON, Canada\\
\texttt{marcopollanen@trentu.ca}
}

\date{}

\maketitle

\begin{abstract}
\noindent A credit rating of AAA asserts near-certainty of repayment. This paper asks whether the pre-crisis information environment could have supported that assertion for structured products. Bayes' theorem implies that any reliability target requires a minimum level of statistical discrimination between instruments that will repay and those that will not. At structured-finance base rates, a four-nines reliability target demands discrimination on the order of $10{,}000$ to 1. A three-nines target demands $1{,}000$ to 1. Nothing in the published credit-prediction literature provides an affirmative basis for believing that discrimination of this magnitude was achievable with the data available at rating time. Retrospectively, the realized system fell short of the four-nines benchmark by roughly $90{,}000$-fold. The framework accommodates the historical feasibility of corporate AAA ratings, where high base rates and rich information produce low required discrimination. Illustrative calibrations for contemporary collateralized loan obligations suggest that material tension between the precision target and the information environment persists. The central implication is that the AAA precision claim itself likely exceeded what the available information could support.
\end{abstract}

\vspace{0.5em}
\noindent\textbf{Keywords:} credit ratings, credit risk, structured finance, structured credit, collateralized debt obligations, impossibility theorem

\newpage

\section{Introduction}
\label{sec:intro}

On August 9, 2007, BNP Paribas suspended redemptions from three investment funds, citing the ``complete evaporation of liquidity'' in segments of the U.S.\ securitization market. The underlying assets had carried high ratings months earlier. Over the subsequent crisis, vast quantities of AAA-rated structured products were downgraded to speculative grade, and losses reached into the trillions \citep{fcic2011}.

The AAA designation has historically implied near-certainty of repayment. Moody's cumulative default rate data show that AAA-rated corporate bonds default at very low rates over five-year horizons, consistent with reliability on the order of $\tgt \approx 0.999$ \citep{moodys2006}. Yet the same label failed comprehensively for structured products while continuing to perform for corporate bonds.

A substantial literature attributes the failure to misaligned incentives \citep{bolton2012}, model error in the Gaussian copula framework \citep{li2000, donnelly2010}, regulatory arbitrage \citep{acharya2013}, weakened originator screening \citep{keys2010}, and rating shopping \citep{skreta2009, griffin2012}. Most of these explanations treat the failure as \emph{contingent}: better incentives, models, or oversight could have restored the promised precision.

This paper investigates a more fundamental explanation. The question is not why the models were wrong, but whether the precision target was attainable at all given the information available at rating time.

The analysis proceeds in three steps. First, a formal impossibility theorem establishes that any reliability target imposes, through Bayes' theorem, a minimum level of statistical discrimination between instruments that will succeed and those that will fail. An upper bound on achievable discrimination follows from the overlap between the conditional distributions of the rating-time signal. When required discrimination exceeds this bound, no admissible decision rule can achieve the target.

Second, the theorem is calibrated to pre-crisis CDOs. At base rates near $\pi = 0.5$, a four-nines target requires discrimination on the order of $10^4$; a three-nines target requires $10^3$. The paper does not directly estimate the discrimination ceiling but makes a narrower claim: the published evidence provides no affirmative basis for believing that tail discrimination approached the required order of magnitude, and even generous benchmark ceilings imply that the ratio of required to available discrimination (the \emph{tension ratio}, defined formally in Section~\ref{sec:methodology}) exceeds 100. Retrospectively, achieved discrimination fell short of the four-nines requirement by roughly 90{,}000-fold.

Third, the framework is extended to contemporary collateralized loan obligations. Despite a zero-default historical record, the calibration suggests that the informational burden remains severe. A zero-default record from a finite sample dominated by benign conditions does not demonstrate discrimination at the required scale.

Four structural features make the gap resistant to amelioration. The ceiling reflects the information content of the data; no algorithm can extract what is absent. Additional data drawn from a single regime need not identify stress-regime parameters. Weak identification in any parameter block can bottleneck overall discrimination. Deterministic structural engineering (pooling, tranching, waterfall design) operates on the same information and cannot create discrimination the data lack.

The paper contributes to three literatures. First, on the failure of structured ratings: existing work documents catastrophic sensitivity to correlation assumptions \citep{coval2009}, weak explanatory power of observable CDO characteristics \citep{benmelech2009}, and discretionary rating inflation \citep{griffin2012}. This paper adds that even absent these failures, the precision target exceeded what the information environment could support. Second, on rating agency incentives: \citet{bolton2012} and \citet{skreta2009} explain why ratings were \emph{inflated}; this paper asks whether the target was achievable even without incentive distortions. Third, on cross-asset rating comparability: \citet{cornaggia2017} document persistent performance differences between corporate and structured ratings. The tension ratio provides a decision-theoretic explanation for why the information required to sustain a given precision level differs sharply across asset classes.

Section~\ref{sec:methodology} develops the framework. Section~\ref{sec:empirical} calibrates to pre-crisis CDOs. Section~\ref{sec:results} validates against corporate bonds and extends to contemporary CLOs. Section~\ref{sec:discussion} discusses implications. Section~\ref{sec:conclusion} concludes.

\section{Framework and Impossibility Theorem}
\label{sec:methodology}

\subsection{The classification problem}

At the AAA threshold, the rating problem reduces to binary classification. Rating agencies sometimes characterize ratings as ordinal rank-orderings rather than cardinal probability statements. At the AAA level, however, the label carries a specific near-zero-default promise to investors and regulators, functioning as a cardinal certification of tail safety.

\begin{definition}[State space]
\label{def:state}
Let $E \in \{0,1\}$ denote the binary event the rating certifies over a fixed horizon, where $E=1$ denotes success (no credit event) and $E=0$ denotes failure. The \emph{base rate} $\pi := \Prob(E = 1 \mid C)$ is the proportion of genuinely successful instruments within a specified reference class $C$ of candidates.
\end{definition}

\begin{definition}[Information and decision rules]
\label{def:decision}
Let $X \in \mathcal{X}$ denote the totality of information available at rating time. A \emph{decision rule} is a measurable function $\phi: \mathcal{X} \to \{0,1\}$, where $\phi(x) = 1$ assigns a AAA rating.
\end{definition}

\begin{definition}[Performance metrics]
\label{def:metrics}
For a decision rule $\phi$, the \emph{sensitivity} is $S(\phi) := \Prob(\phi(X) = 1 \mid E = 1)$, the \emph{false positive rate} is $F(\phi) := \Prob(\phi(X) = 1 \mid E = 0)$, and the \emph{positive predictive value} is $\PPV(\phi) := \Prob(E = 1 \mid \phi(X) = 1)$.
\end{definition}

All probabilities are conditional on the reference class $C$, suppressed notationally. The PPV is the quantity that matters for investors: a AAA rating claiming 99.99\% reliability asserts $\PPV \geq 0.9999$.

Throughout, we restrict to $\pi \in (0,1)$ and $\tgt \in (0,1)$. An \emph{admissible} decision rule satisfies $\Prob(\phi(X) = 1) > 0$. Write $\mathcal{A}:=\{\phi:\Prob(\phi(X)=1)>0\}$.

\begin{remark}[Completeness of the information set]
\label{rem:X_complete}
For the impossibility results to bind, $X$ must represent the full rating-time information, including analyst judgment and committee overrides. If the actual process uses an enriched signal $X' = (X, U)$, then $\esssup\, dP_1^X/dP_0^X \leq \esssup\, dP_1^{X'}/dP_0^{X'}$, and the formal impossibility applies only to the coarser signal.
\end{remark}

\begin{remark}[Event and horizon dependence]
\label{rem:event_horizon}
The framework is exact only after the certified event and evaluation horizon have been fixed. Cross-application comparisons are like-for-like only when the same event/horizon pair is imposed throughout.
\end{remark}

\subsection{The precision bound}

\begin{definition}[Discrimination ratio]
\label{def:discrimination}
For a decision rule $\phi$ with $F(\phi) > 0$, the \emph{discrimination ratio} is $\Lambda(\phi) := S(\phi)/F(\phi)$.
\end{definition}

\begin{theorem}[Precision Bound]
\label{thm:precision_bound}
For any decision rule $\phi$ with $F(\phi) > 0$:
\begin{equation}
\label{eq:ppv_formula}
\PPV(\phi) = \frac{\pi \Lambda(\phi)}{\pi \Lambda(\phi) + (1-\pi)}.
\end{equation}
Achieving $\PPV(\phi) \geq \tgt$ requires
\begin{equation}
\label{eq:lambda_req}
\Lambda(\phi) \geq \Lreq(\tgt, \pi) := \frac{\tgt}{1-\tgt} \cdot \frac{1-\pi}{\pi}.
\end{equation}
\end{theorem}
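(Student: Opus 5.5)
The plan is a direct application of Bayes' theorem, followed by a monotonicity argument to turn the PPV constraint into a constraint on $\Lambda(\phi)$. First I expand the event $\{\phi(X)=1\}$ by the law of total probability over $E$:
\begin{equation*}
\Prob(\phi(X)=1) = \pi S(\phi) + (1-\pi) F(\phi).
\end{equation*}
Since $F(\phi)>0$ and $\pi<1$, this quantity is strictly positive, so the conditional probability defining $\PPV(\phi)$ is well defined. In particular every rule covered by the theorem is admissible. Bayes' theorem then gives
\begin{equation*}
\PPV(\phi) = \frac{\pi S(\phi)}{\pi S(\phi) + (1-\pi)F(\phi)}.
\end{equation*}
Dividing numerator and denominator by $F(\phi)>0$ and substituting $\Lambda(\phi)=S(\phi)/F(\phi)$ from Definition~\ref{def:discrimination} yields \eqref{eq:ppv_formula}.

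For the second claim, I consider the map $g(\lambda) := \pi\lambda/(\pi\lambda + 1-\pi)$ on $[0,\infty)$. Writing it as $g(\lambda) = 1 - (1-\pi)/(\pi\lambda+1-\pi)$ shows it is strictly increasing, since $\pi>0$ and $1-\pi>0$. The condition $\PPV(\phi)\ge\tgt$ reads $g(\Lambda(\phi))\ge\tgt$. Because the denominator $\pi\Lambda+(1-\pi)$ is positive, I can clear it to get
\begin{equation*}
\pi\Lambda(1-\tgt) \ge \tgt(1-\pi).
\end{equation*}
Dividing by $\pi(1-\tgt)>0$, which uses $\tgt<1$ and $\pi>0$, gives $\Lambda(\phi)\ge \Lreq(\tgt,\pi)$. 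Every step is reversible, so the condition is in fact an equivalence. The theorem only needs the forward direction.

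There is no real obstacle here. The only points needing care are the well-definedness of the PPV (the denominator being positive) and keeping track of the sign of each divisor, both guaranteed by the standing assumptions $\pi,\tgt\in(0,1)$ and $F(\phi)>0$. I would add a brief remark on the excluded case $F(\phi)=0$: there $\PPV(\phi)=1$ whenever $S(\phi)>0$, which is why the theorem is stated only for $F(\phi)>0$.
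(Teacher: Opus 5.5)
Your proof is correct and follows the paper's route: Bayes' theorem, division by $F(\phi)$ to get \eqref{eq:ppv_formula}, then solving $\PPV(\phi)\ge\tgt$ for $\Lambda(\phi)$. Your extra checks (the positive denominator, which makes the PPV well defined, and the signs of the divisors under $\pi,\tgt\in(0,1)$) make explicit what the paper's three-line proof leaves implicit.
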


\begin{proof}
Bayes' theorem gives $\PPV(\phi) = S\pi / (S\pi + F(1-\pi))$. Dividing numerator and denominator by $F$ yields~\eqref{eq:ppv_formula}. Solving $\PPV \geq \tgt$ for $\Lambda$ gives~\eqref{eq:lambda_req}.
\end{proof}

Table~\ref{tab:lambda_req} displays the required discrimination for $\tgt = 0.9999$. At $\pi = 0.50$, the requirement is approximately $10{,}000$. The relationship between PPV, base rate, and discrimination is familiar from medical screening \citep{casscells1978, gigerenzer2002}. The contribution here is the calibration to structured finance and the argument that available discrimination fell dramatically short.

\begin{table}[H]
\centering
\caption{Required discrimination $\Lreq$ for $\tgt = 0.9999$ at various base rates.}
\label{tab:lambda_req}
\begin{tabular}{ccc}
\toprule
\textbf{Base rate $\pi$} & \textbf{$(1-\pi)/\pi$} & \textbf{Required $\Lreq$} \\
\midrule
0.90 & 0.11 & 1,100 \\
0.70 & 0.43 & 4,300 \\
0.50 & 1.00 & 10,000 \\
0.30 & 2.33 & 23,300 \\
0.10 & 9.00 & 90,000 \\
\bottomrule
\end{tabular}
\end{table}

\begin{remark}[Interpretation of the four-nines benchmark]
\label{rem:four_nines}
The calibration $\tgt = 0.9999$ is an upper bound on what AAA might mean. Moody's cumulative default rate data imply that the historical corporate AAA five-year default rate corresponds roughly to $\tgt \approx 0.999$ \citep{moodys2006}. Because $\Lreq$ scales like $\tgt/(1-\tgt)$, moving from $0.999$ to $0.9999$ multiplies required discrimination by 10. The main conclusion for pre-crisis CDOs is robust to this change; see Section~\ref{sec:empirical}.
\end{remark}

\subsection{The discrimination ceiling}

Theorem~\ref{thm:precision_bound} establishes how much discrimination is required. The next step is to bound how much is available.

Let $\Prob_0$ and $\Prob_1$ denote the conditional laws of $X$ given $E = 0$ and $E = 1$. We assume the overlap condition $\Prob_1 \ll \Prob_0$ (absolute continuity), so that the likelihood ratio $L(x) := d\Prob_1/d\Prob_0(x)$ is well defined.

\begin{remark}[Plausibility of the overlap condition]
\label{rem:overlap_plausible}
The overlap condition requires that no measurable combination of rating-time observables is literally impossible for instruments that subsequently fail. In structured finance, rating-time information consists of continuous or near-continuous variables, and failure is driven by latent variables not fully identified by observables. The condition is therefore natural. It would fail only if a signal at rating time perfectly ruled out subsequent failure.
\end{remark}

\begin{theorem}[Discrimination Ceiling]
\label{thm:discrimination_ceiling}
For any decision rule $\phi$ with $F(\phi) > 0$,
\begin{equation}
\Lambda(\phi) \leq \esssup L,
\end{equation}
where $\esssup L := \inf\{ c \in [0,\infty] : \Prob_0(L(X) > c) = 0 \}$. The bound is tight: for every $\varepsilon > 0$ (if $\esssup L < \infty$) or every $M < \infty$ (if $\esssup L = \infty$), there exists an admissible rule achieving discrimination within $\varepsilon$ of $\esssup L$ or exceeding $M$, respectively.
\end{theorem}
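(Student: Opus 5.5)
The plan is a change-of-measure argument for the upper bound, followed by a likelihood-ratio threshold construction for tightness.

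\textbf{Upper bound.} Write $A := \{x : \phi(x) = 1\}$ and $M^\ast := \esssup L$; if $M^\ast = \infty$ there is nothing to prove, so assume $M^\ast < \infty$. Because $\Prob_1 \ll \Prob_0$, the Radon--Nikodym identity gives
\[
S(\phi) = \Prob_1(A) = \int_A L \, d\Prob_0 .
\]
By definition of the essential supremum, $\Prob_0(L > M^\ast) = 0$, which requires one line: $\Prob_0(L > M^\ast) = \lim_n \Prob_0(L > M^\ast + 1/n) = 0$ by continuity from below. Hence $L \le M^\ast$ holds $\Prob_0$-a.s. on $A$, so
\[
S(\phi) \le M^\ast \, \Prob_0(A) = M^\ast F(\phi).
\]
Dividing by $F(\phi) > 0$ yields $\Lambda(\phi) \le M^\ast$.

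\textbf{Tightness.} I will use threshold rules $\phi_c(x) := \ind\{L(x) > c\}$, with a fixed measurable version of $L$ so that $\phi_c$ is measurable.

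\emph{Finite case.} Assume $M^\ast < \infty$ and take $c = M^\ast - \varepsilon$. If $c < 0$, any rule with $F > 0$ already has $\Lambda \ge 0 > c$, for example $\phi \equiv 1$ with $\Lambda = 1$ and $F = 1$; so assume $c \ge 0$. Since $c < M^\ast$, the definition of $\esssup$ gives $F(\phi_c) = \Prob_0(L > c) > 0$. Then
\[
S(\phi_c) = \int_{\{L > c\}} L \, d\Prob_0 \ge c \, F(\phi_c),
\]
so $\Lambda(\phi_c) \ge c$, i.e.\ within $\varepsilon$ of $M^\ast$. The inequality is strict when $c \ge 0$, because $L > c$ on a set of positive $\Prob_0$-measure.

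\emph{Infinite case.} If $M^\ast = \infty$, take $c = M$. Then $\Prob_0(L > M) > 0$, and the same computation gives $\Lambda(\phi_M) > M$.

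\emph{Admissibility.} I still need $\Prob(\phi_c(X) = 1) > 0$. This holds because
\[
\Prob(\phi_c = 1) = \pi S + (1-\pi) F \ge (1-\pi) F(\phi_c) > 0,
\]
using $\pi < 1$.

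\textbf{Main obstacle.} The argument is routine, but two measure-theoretic points need explicit care. The first is that the infimum defining $\esssup$ is attained, i.e.\ $\Prob_0(L > M^\ast) = 0$; this is the continuity argument above. The second is the convention $\Prob_0(L = \infty) = 0$. That convention is automatic, since $L$ is a Radon--Nikodym derivative of a finite measure and is therefore $\Prob_0$-a.s.\ finite; this keeps the integral identity valid in the infinite case. I would state both points explicitly rather than leave them implicit.
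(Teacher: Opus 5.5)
Your proof is correct and follows the paper's argument: the upper bound is the same Radon--Nikodym averaging of $L$ over the acceptance region, and tightness uses the same likelihood-ratio threshold rules. You threshold at $\{L > c\}$ where the paper uses $\{L \geq \esssup L - \varepsilon\}$, which gives strict exceedance of $M$; you also explicitly check that the essential supremum is attained and that the threshold rules are admissible, details the paper leaves implicit.
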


\begin{proof}
Let $A_\phi := \{x : \phi(x)=1\}$. By the Radon--Nikodym theorem,
\[
\Lambda(\phi)
= \frac{\E_0[\phi(X)L(X)]}{\E_0[\phi(X)]},
\]
which is the $\Prob_0$-average of $L$ over $A_\phi$, hence bounded by $\esssup L$.

For tightness when $\esssup L < \infty$, define $\phi_\varepsilon(x) = \ind\{L(x) \geq \esssup L - \varepsilon\}$. The definition of essential supremum ensures $F(\phi_\varepsilon) > 0$, and on $A_{\phi_\varepsilon}$ one has $L \geq \esssup L - \varepsilon$, giving $\Lambda(\phi_\varepsilon) \geq \esssup L - \varepsilon$.

When $\esssup L = \infty$, define $\phi_M(x) = \ind\{L(x) \geq M\}$. Then $F(\phi_M) > 0$ and $\Lambda(\phi_M) \geq M$.
\end{proof}

The result is closely related to the Neyman--Pearson lemma \citep{neyman1933}. The ceiling $\Lavail := \esssup L$ is a property of the data-generating process, not of any algorithm. The same bound applies to randomized rules, since $\Lambda(\alpha) = \E_0[\alpha(X)L(X)]/\E_0[\alpha(X)] \leq \esssup L$ for any measurable $\alpha: \mathcal{X} \to [0,1]$.

\subsection{The tension ratio and impossibility}

\begin{definition}[Tension ratio]
\label{def:tension}
The \emph{tension ratio} is
\[
\Psi :=
\begin{cases}
\Lreq / \Lavail, & \text{if } \Lavail < \infty,\\[0.4em]
0, & \text{if } \Lavail = \infty.
\end{cases}
\]
\end{definition}

\begin{theorem}[Impossibility]
\label{thm:impossibility}
If $\Psi > 1$, then necessarily $\Lavail < \infty$, and no admissible decision rule can achieve target precision $\tgt$:
\[
\sup_{\phi \in \mathcal{A}} \PPV(\phi)
= \frac{\pi \Lavail}{\pi \Lavail + (1-\pi)}
< \tgt.
\]
\end{theorem}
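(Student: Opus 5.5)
The argument combines Theorem~\ref{thm:precision_bound} with Theorem~\ref{thm:discrimination_ceiling}, using that the map $g(\lambda) := \pi\lambda/(\pi\lambda + (1-\pi))$ is continuous and strictly increasing on $[0,\infty)$.

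\emph{Finiteness of $\Lavail$.} By Definition~\ref{def:tension}, $\Lavail = \infty$ forces $\Psi = 0$. So $\Psi > 1$ already implies $\Lavail < \infty$, and then $\Psi = \Lreq/\Lavail > 1$ means $\Lavail < \Lreq$.

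\emph{Upper bound on the supremum.} Fix any admissible $\phi \in \mathcal{A}$ and split into two cases according to $F(\phi)$.

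\textbf{Case $F(\phi) = 0$.} Here $\Prob(\phi(X)=1 \mid E=0) = 0$. Then $\Prob_0(A_\phi) = 0$, and the overlap condition $\Prob_1 \ll \Prob_0$ gives $\Prob_1(A_\phi) = 0$, so $S(\phi) = 0$. Hence $\Prob(\phi(X)=1) = \pi S + (1-\pi)F = 0$, which contradicts admissibility. So every $\phi \in \mathcal{A}$ has $F(\phi) > 0$. This is the one step that needs care, and absolute continuity is exactly what settles it.

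\textbf{Case $F(\phi) > 0$.} Formula~\eqref{eq:ppv_formula} gives $\PPV(\phi) = g(\Lambda(\phi))$. Theorem~\ref{thm:discrimination_ceiling} gives $\Lambda(\phi) \le \Lavail$, and since $g$ is monotone, $\PPV(\phi) \le g(\Lavail)$. Taking the supremum over $\mathcal{A}$ yields $\sup_{\phi\in\mathcal{A}} \PPV(\phi) \le g(\Lavail)$.

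\emph{Matching lower bound.} For the reverse inequality, use the tightness part of Theorem~\ref{thm:discrimination_ceiling}. For each $\varepsilon > 0$ the rule $\phi_\varepsilon$ satisfies $F(\phi_\varepsilon) > 0$, so it is admissible, and $\Lambda(\phi_\varepsilon) \ge \Lavail - \varepsilon$. Therefore $\PPV(\phi_\varepsilon) \ge g(\Lavail - \varepsilon)$, where we take $\Lavail - \varepsilon \ge 0$, which is harmless for small $\varepsilon$ when $\Lavail > 0$. If $\Lavail = 0$, both sides equal $0$ trivially. Letting $\varepsilon \to 0$ and using continuity of $g$ gives $\sup_{\phi\in\mathcal{A}} \PPV(\phi) \ge g(\Lavail)$. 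Combined with the upper bound, the supremum equals $g(\Lavail) = \pi\Lavail/(\pi\Lavail + (1-\pi))$.

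\emph{Strict inequality.} Since $g$ is strictly increasing and $\Lavail < \Lreq$, we get $g(\Lavail) < g(\Lreq)$. Solving~\eqref{eq:lambda_req} with equality shows $g(\Lreq) = \tgt$. Explicitly,
\[
\pi\Lreq = \frac{\tgt(1-\pi)}{1-\tgt},
\]
so
\[
g(\Lreq) = \frac{\tgt/(1-\tgt)}{\tgt/(1-\tgt) + 1} = \tgt .
\]
Hence $\sup_{\phi\in\mathcal{A}} \PPV(\phi) = g(\Lavail) < \tgt$. Because the supremum is strictly below $\tgt$, no admissible rule attains $\PPV(\phi) \ge \tgt$.
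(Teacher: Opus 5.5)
Your proof is correct and follows the paper's own argument: absolute continuity forces $F(\phi)>0$ for every admissible rule, the discrimination ceiling plus monotonicity of the map $g$ gives the upper bound, tightness of the ceiling gives the supremum, and strict monotonicity together with $g(\Lreq)=\tgt$ gives the strict inequality. You make explicit a few steps the paper leaves implicit: finiteness of $\Lavail$ from the definition of $\Psi$, admissibility of $\phi_\varepsilon$, and continuity of $g$. The $\Lavail=0$ case you treat is actually vacuous, since $\E_0[L]=1$ forces $\Lavail\ge 1$.
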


\begin{proof}
Under $\Prob_1 \ll \Prob_0$, any admissible rule must satisfy $F(\phi) > 0$: if $F(\phi) = 0$, then $\Prob_0(A_\phi) = 0$, and absolute continuity forces $\Prob_1(A_\phi) = 0$, contradicting admissibility.

Every admissible rule therefore satisfies $\Lambda(\phi) \leq \Lavail$ by Theorem~\ref{thm:discrimination_ceiling}. Substituting into~\eqref{eq:ppv_formula}:
\[
\PPV(\phi) \leq \frac{\pi \Lavail}{\pi \Lavail + (1-\pi)}.
\]
Tightness of the discrimination ceiling establishes the supremum. If $\Psi > 1$, then $\Lavail < \Lreq$, and strict monotonicity of~\eqref{eq:ppv_formula} in $\Lambda$ gives the strict inequality.
\end{proof}

\subsection{Coverage-constrained impossibility}

The unconstrained ceiling $\Lavail$ can be driven by extremely small acceptance regions. Because pre-crisis AAA ratings were issued at scale, a refinement is useful.

\begin{definition}[Coverage-constrained ceiling]
\label{def:coverage_available}
Fix $\pi \in (0,1)$ and a minimum issuance rate $q \in (0,1)$. Define the unconditional issuance probability $I_\pi(\phi):=\pi S(\phi)+(1-\pi)F(\phi)$. The \emph{coverage-constrained discrimination ceiling} is
\[
\Lavail^{(q)}(\pi)
:= \sup\left\{\Lambda(\phi) : \phi \in \mathcal{A},\; I_\pi(\phi) \geq q\right\}.
\]
Since the supremum is over a smaller class, $\Lavail^{(q)}(\pi) \leq \Lavail$.
\end{definition}

\begin{proposition}[Coverage-Constrained Impossibility]
\label{prop:coverage_impossibility}
If $\Lreq(\tgt, \pi) > \Lavail^{(q)}(\pi)$, then no decision rule with issuance rate at least $q$ can achieve $\PPV \geq \tgt$.
\end{proposition}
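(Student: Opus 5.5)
The plan is a direct contrapositive argument that reduces the claim to Theorem~\ref{thm:precision_bound} applied inside the restricted class of rules. Suppose, for contradiction, that some decision rule $\phi$ has issuance rate $I_\pi(\phi) \geq q$ and satisfies $\PPV(\phi) \geq \tgt$. Since $q > 0$, we have $\Prob(\phi(X)=1) = I_\pi(\phi) \geq q > 0$, so $\phi \in \mathcal{A}$. Hence $\phi$ is one of the rules over which the supremum defining $\Lavail^{(q)}(\pi)$ is taken.

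Next, $\Lambda(\phi)$ must be well defined, which requires $F(\phi) > 0$. I would reuse the first paragraph of the proof of Theorem~\ref{thm:impossibility}. If $F(\phi)=0$, then $\Prob_0(A_\phi)=0$. The overlap condition $\Prob_1 \ll \Prob_0$ then forces $\Prob_1(A_\phi)=0$, so $I_\pi(\phi)=0$, which contradicts $I_\pi(\phi)\ge q>0$. This is the only point needing care. Without absolute continuity, a rule with $F=0$ would have $\PPV=1$, and the proposition could fail. So I would state explicitly that the standing overlap assumption is being invoked. Alternatively, I would note that under the convention $\Lambda=\infty$ for such rules, the ceiling $\Lavail^{(q)}(\pi)$ would itself be infinite and the hypothesis could not hold.

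With $F(\phi)>0$, Theorem~\ref{thm:precision_bound} applies. From $\PPV(\phi)\ge\tgt$ it gives $\Lambda(\phi)\ge \Lreq(\tgt,\pi)$. On the other hand, membership of $\phi$ in the constrained class gives $\Lambda(\phi)\le \Lavail^{(q)}(\pi)$ by definition of the supremum. Chaining these yields $\Lreq(\tgt,\pi)\le\Lavail^{(q)}(\pi)$, which contradicts the hypothesis. The conclusion then follows.

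As a byproduct, I would record the quantitative form. Formula \eqref{eq:ppv_formula} is strictly increasing in $\Lambda$, so every rule with issuance rate at least $q$ satisfies
\[
\PPV(\phi)\le \frac{\pi\Lavail^{(q)}(\pi)}{\pi\Lavail^{(q)}(\pi)+(1-\pi)} < \tgt .
\]
This mirrors Theorem~\ref{thm:impossibility}. I would not claim this supremum is attained, since tightness would require a separate construction.
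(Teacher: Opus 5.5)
Your proposal is correct and matches the paper's proof: admissibility follows from $I_\pi(\phi)\ge q>0$, the overlap condition $\Prob_1\ll\Prob_0$ gives $F(\phi)>0$, and then the definition of $\Lavail^{(q)}(\pi)$ together with the monotonicity of the $\PPV$ formula yields the contradiction. You state explicitly that the argument depends on the overlap assumption, which the paper also uses but invokes only briefly.
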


\begin{proof}
Any rule with $I_\pi(\phi) \geq q > 0$ is admissible and under $\Prob_1 \ll \Prob_0$ has $F(\phi) > 0$. The result follows from the definition of $\Lavail^{(q)}(\pi)$ and monotonicity of~\eqref{eq:ppv_formula}.
\end{proof}

The unconstrained tension ratios reported below are therefore conservative: at-scale ratios can only be larger.

\subsection{Structural barriers}

Four features of the structured-finance setting prevent $\Lavail$ from reaching $\Lreq$.

\paragraph{Fixed-information limitation.} The discrimination ceiling is determined by the overlap between $\Prob_0$ and $\Prob_1$. No algorithm can extract discrimination absent from the data.

\paragraph{Regime uncertainty.} The pre-crisis historical sample was drawn overwhelmingly from benign conditions. Enlarging it sharpens within-regime inference without necessarily identifying the dependence structure relevant under stress.

\paragraph{Bottleneck uncertainty.} Weak identification in one parameter block can materially limit overall discrimination. For pre-crisis CDOs, at least three inputs were severely uncertain: default correlation \citep{coval2009}, subprime default probabilities \citep{demyanyk2011}, and recovery rates.

\paragraph{Structural transformation bound.} Deterministic transformations cannot increase the ceiling.

\begin{proposition}[Structural Transformation Bound]
\label{prop:engineering}
Let $Y = g(X)$ be any deterministic measurable transformation. Let $P_i^X$ denote the law of $X$ under $E = i$ and $P_i^Y$ the pushforward. Assume $P_1^X \ll P_0^X$. Then
\[
L_Y(Y)=\E_{P_0^X}[L_X(X)\mid Y]
\qquad P_0^X\text{-a.s.},
\]
and consequently $\esssup L_Y \leq \esssup L_X$.
\end{proposition}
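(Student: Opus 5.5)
The plan is to prove the conditional-expectation identity first and then read off the essential-supremum bound from it. Three things need to be settled first: $P_1^Y \ll P_0^Y$, the meaning of $L_Y(Y)$, and the conditioning $\sigma$-algebra.

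\emph{Absolute continuity.} If $P_0^Y(B)=0$ for a measurable $B$, then $P_0^X(g^{-1}(B))=0$. Since $P_1^X \ll P_0^X$, this gives $P_1^X(g^{-1}(B))=0$, that is, $P_1^Y(B)=0$. So $L_Y := dP_1^Y/dP_0^Y$ exists, and $L_Y(Y)=L_Y\circ g$ is a $\sigma(Y)$-measurable function on $\mathcal{X}$. Here $\sigma(Y)=g^{-1}(\mathcal{B}_Y)$, and conditioning is with respect to $\sigma(Y)$ under $P_0^X$.

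\emph{The identity.} I will verify the defining property of conditional expectation on generating sets. For any measurable $B$,
\[
\E_{P_0^X}\bigl[L_Y(g(X))\,\ind\{g(X)\in B\}\bigr]
= \int_B L_Y\, dP_0^Y
= P_1^Y(B)
= P_1^X(g^{-1}(B))
= \E_{P_0^X}\bigl[L_X(X)\,\ind\{g(X)\in B\}\bigr].
\]
The first equality is the change-of-variables formula for pushforwards. The second is the definition of $L_Y$. The last is the Radon--Nikodym property of $L_X$. Both integrands are nonnegative and integrable, since each integrates to $1$ over the whole space, so conditional expectations are well defined. The sets $\{g(X)\in B\}$ exhaust $\sigma(Y)$, and $L_Y\circ g$ is $\sigma(Y)$-measurable. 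By uniqueness of conditional expectation, $L_Y(Y)=\E_{P_0^X}[L_X(X)\mid Y]$ holds $P_0^X$-a.s.

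\emph{The ceiling bound.} Let $c=\esssup L_X$ under $P_0^X$; if $c=\infty$ there is nothing to prove. Otherwise $L_X\le c$ holds $P_0^X$-a.s. Monotonicity of conditional expectation then gives $L_Y\circ g\le c$, $P_0^X$-a.s. This pushes forward: $P_0^Y(L_Y>c)=P_0^X(L_Y\circ g>c)=0$, so $\esssup L_Y\le c$.

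An alternative route to the last step uses the decision-rule language already in the paper. Any rule based on $Y$ is the composite rule $\psi\circ g$ based on $X$, so the tightness part of Theorem~\ref{thm:discrimination_ceiling} applied to $Y$, together with the upper bound applied to $X$, gives the same inequality. I would mention this only as a remark.

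The main obstacle is bookkeeping rather than anything substantive. One must keep clear which measure the ``a.s.'' and the essential supremum refer to. Note that $\esssup L_Y$ is taken under $P_0^Y$, while the conditional-expectation statement lives on $\mathcal{X}$ under $P_0^X$. One must also justify transferring null sets through $g$, which is done by the pushforward identity $P_0^Y(B)=P_0^X(g^{-1}(B))$ used above.
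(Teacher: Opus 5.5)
Your proposal is correct and follows essentially the same argument as the paper. Both rest on the identity $P_1^Y(B)=\E_{P_0^X}[\ind_{\{Y\in B\}}L_X(X)]$ for all measurable $B$; you conclude via uniqueness of conditional expectation, the paper via uniqueness of the Radon--Nikodym derivative, and you additionally make explicit the absolute continuity $P_1^Y\ll P_0^Y$ and the transfer of the bound $L_Y\circ g\le \esssup L_X$ from $P_0^X$-null sets to $P_0^Y$-null sets, which the paper leaves implicit.
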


\begin{proof}
The pushforward measures satisfy $P_1^Y \ll P_0^Y$. For any measurable $B$,
\[
P_1^Y(B) = \E_{P_0^X}[\ind_{\{Y \in B\}} L_X(X)] = \E_{P_0^X}[\ind_{\{Y \in B\}} \E_{P_0^X}[L_X(X) \mid Y]].
\]
By the Radon--Nikodym property, $L_Y(Y) = \E_{P_0^X}[L_X(X) \mid Y]$ a.s., so $L_Y(Y) \leq \esssup L_X$ a.s.
\end{proof}

Pooling, tranching, and waterfall rules operate on the same underlying information and cannot create discrimination the data lack. The point connects to \citet{demarzo2005}: pooling and tranching create informationally insensitive securities, a property that runs in the wrong direction for tail certification.

\begin{remark}[Tranching can change $\Lavail$ by changing $E$]
\label{rem:tranching_changes_E}
Proposition~\ref{prop:engineering} holds for a fixed certified event. Changing the tranche structure redefines $E$, which changes the conditional laws and hence $\Lavail$. Deepening subordination can raise the tranche-level ceiling without new information by moving the certification boundary into a region where the signal laws separate more sharply. The empirical tension ratios below are conditional on the structures actually issued.
\end{remark}

\section{Calibration to Pre-Crisis CDOs}
\label{sec:empirical}

\subsection{The informational burden}

Throughout this section, the reference class is the population of senior CDO tranches presented for AAA rating during 2005--2007. The certified event is survival without downgrade to speculative grade over a five-year horizon. The base rate $\pi$ is the fraction of that population that would have survived under this definition. Approximately 83\% of Moody's Aaa-rated mortgage-backed security tranches from 2006 were subsequently downgraded \citep{fcic2011}, and failure rates for ABS CDO tranches were higher still, with more than 80\% of Aaa CDO bonds eventually downgraded to junk \citep{fcic2011, benmelech2009credit}. Since these were the instruments the rating process selected, the unfiltered candidate pool would have had a survival rate no higher, making $\pi = 0.50$ a generous assumption. Results at $\pi = 0.30$ and $\pi = 0.10$ are also reported.

The required discrimination is not an estimate. It follows directly from Bayes' theorem. At $\pi = 0.50$, $\Lreq \approx 10{,}000$. At $\pi = 0.30$, $\Lreq \approx 23{,}300$. The empirical question is whether any feature of the pre-crisis information environment makes a ceiling of this magnitude plausible.

A direct estimate of $\Lavail$ would require fully specified conditional laws of the rating-time signal. This paper does not attempt that exercise. Instead, it derives the burden exactly and asks whether the historical environment offers any affirmative reason to believe the burden was supportable. Standard moment-matching validation does not resolve the question: Theorem~\ref{thm:moment_insufficiency} (Appendix~\ref{app:moments}) shows that matching finitely many distributional moments leaves extreme-tail behavior underdetermined.

For score-based classifiers, the relevant object at a stringent threshold is the local ratio $S(t)/F(t)$, not AUC. In published credit-risk studies, threshold-specific performance at relevant operating points does not approach $10^4$ \citep{demyanyk2011, khandani2010}. Machine-learning gains are measured in percentage points of AUC, not the orders-of-magnitude improvement in extreme-threshold discrimination needed here.

\begin{example}[Binormal benchmark]
\label{ex:binormal}
Under equal-variance binormality ($X \mid E = i \sim N(\mu_i, 1)$ with separability $d' = \mu_1 - \mu_0$), AUC $= \Phi(d'/\sqrt{2})$ and a threshold rule $\phi_t(x) = \ind\{x \geq t\}$ achieves
\[
\Lambda(t) = \frac{\Phi(d' - t)}{\Phi(-t)}.
\]
Table~\ref{tab:binormal} reports achievable discrimination at stringent false-positive rates for AUC values spanning the published credit-prediction literature.

\begin{table}[H]
\centering
\caption{Achievable discrimination $\Lambda$ under equal-variance binormality at two false-positive rates.}
\label{tab:binormal}
\begin{tabular}{cccc}
\toprule
\textbf{AUC} & \textbf{$d'$} & \textbf{$\Lambda$ at $\FPR = 10^{-3}$} & \textbf{$\Lambda$ at $\FPR = 10^{-4}$} \\
\midrule
0.85 & 1.47 & 52 & 121 \\
0.90 & 1.81 & 101 & 283 \\
0.95 & 2.33 & 222 & 818 \\
0.99 & 3.29 & 579 & 3{,}339 \\
\bottomrule
\end{tabular}
\end{table}

\noindent Credit-risk prediction models in the published literature typically achieve AUCs in the range of $0.80$--$0.90$. Under binormality, models in this range achieve $\Lambda$ between 50 and 300 at false-positive rates of $10^{-3}$ to $10^{-4}$. A model-free bound applies as well: since $S \leq 1$, any rule with $\FPR = F$ satisfies $\Lambda \leq 1/F$, so $\Lambda = 10{,}000$ requires $\FPR \leq 10^{-4}$ with near-perfect sensitivity. The binormal model is not a worst case: distributions with the same AUC can produce higher or lower tail $\Lambda$. Bridging the gap from the low hundreds to $10^4$ would, however, require a score distribution far more informative in the tails than any published credit model has demonstrated.
\end{example}

Three diagnostics point in the same direction. First, published subprime prediction studies provide no evidence of threshold-specific discrimination close to $10^4$ \citep{demyanyk2011, khandani2010}. Second, small changes in default correlation shift senior tranche loss probabilities by factors of 10 to 100 \citep{coval2009}. Third, observable CDO characteristics explained little subsequent performance variation \citep{benmelech2009}. None formally bounds $\Lavail$, but together they establish that no observable feature of the pre-crisis information environment supports a belief that tail discrimination reached even the low hundreds.

The burden of proof is asymmetric. A rating agency issuing a AAA certification makes a precision claim; Bayes' theorem determines the discrimination that claim requires. The question is not whether a skeptic can prove $\Lavail < 10{,}000$, but whether the certifier can provide an affirmative basis for $\Lavail \geq 10{,}000$. The published record offers none.

\subsection{Benchmark tension ratios}

The tables below use $\Lavail = 100$ as a benchmark ceiling. Under the binormal calibration in Example~\ref{ex:binormal}, AUC $= 0.90$ yields $\Lambda \approx 101$ at $\FPR = 10^{-3}$ and $\Lambda \approx 283$ at $\FPR = 10^{-4}$. The benchmark is therefore generous relative to published AUCs of $0.80$--$0.90$. Smaller values are also reported to show sensitivity.

\begin{table}[H]
\centering
\caption{Unconstrained tension ratios for pre-crisis CDOs ($\tgt = 0.9999$). The ceiling $\Lavail$ is a conservative benchmark; see Example~\ref{ex:binormal}.}
\label{tab:calibration}
\begin{tabular}{cccc}
\toprule
\textbf{Base rate $\pi$} & \textbf{$\Lreq$} & \textbf{$\Lavail$} & \textbf{$\Psi$} \\
\midrule
0.50 & 10,000 & 100 & 100 \\
0.50 & 10,000 & 50 & 200 \\
0.50 & 10,000 & 20 & 500 \\
0.30 & 23,300 & 100 & 233 \\
0.30 & 23,300 & 50 & 466 \\
0.10 & 90,000 & 100 & 900 \\
\bottomrule
\end{tabular}
\end{table}

\noindent The conclusion is robust to the choice of ceiling. Even at $\Lavail = 500$, a level of discrimination far beyond any published credit model, $\Psi = 20$ at $\pi = 0.50$ and $\Psi \approx 47$ at $\pi = 0.30$ under four-nines.

\subsection{Robustness to the three-nines target}

Moody's cumulative default rate data imply that the historical corporate AAA five-year default rate corresponds roughly to $\tgt \approx 0.999$ \citep{moodys2006}. At that target and $\pi = 0.50$, $\Lreq = 999$. Under $\Lavail = 100$, $\Psi \approx 10$. At $\pi = 0.30$, $\Lreq \approx 2{,}331$ and $\Psi \approx 23$. Even this less stringent benchmark requires threshold-specific discrimination ratios on the order of $10^3$ at operating points relevant to structured-finance tail risk. The published evidence provides no affirmative support for discrimination of that magnitude.

\begin{corollary}[Benchmark Impossibility for Pre-Crisis CDOs]
\label{cor:cdo_impossibility}
Fix the rating-time information set, reference class, downgrade-based certification event, and evaluation window. If $\pi \leq 0.50$, the overlap condition holds, and $\Lavail \leq 100$, then $\Psi \geq 100$ and no admissible decision rule can achieve $\PPV \geq 0.9999$.
\end{corollary}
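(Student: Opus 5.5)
The plan is to reduce the corollary to Theorem~\ref{thm:impossibility}. That means checking that the hypotheses force $\Psi > 1$ (indeed $\Psi$ of order $100$) and then reading off the conclusion.

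First, note that under the overlap condition $\Prob_1 \ll \Prob_0$ the likelihood ratio satisfies $\E_0[L(X)] = \Prob_1(\mathcal{X}) = 1$. Hence $\esssup L \geq 1$, so $\Lavail \in [1,100]$. In particular $\Lavail$ is finite and strictly positive, so we are in the first branch of Definition~\ref{def:tension} and $\Psi = \Lreq/\Lavail$ is a genuine ratio.

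Second, bound $\Lreq$ from below using Theorem~\ref{thm:precision_bound}. With $\tgt = 0.9999$ we have $\tgt/(1-\tgt) = 9999$. The map $\pi \mapsto (1-\pi)/\pi$ is decreasing on $(0,1)$, so $\pi \leq 0.50$ gives $(1-\pi)/\pi \geq 1$. Therefore
\[
\Lreq(0.9999,\pi) \geq 9999,
\qquad\text{so}\qquad
\Psi = \frac{\Lreq}{\Lavail} \geq \frac{9999}{100} = 99.99 .
\]

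The one delicate point is the stated constant, and this is the main obstacle. Read literally, the bound gives $\Psi \geq 99.99$, not $\Psi \geq 100$. I would handle this in one of two ways.
\begin{itemize}
\item Interpret $\Lreq \approx 10{,}000$ as in Table~\ref{tab:lambda_req}, so that the claim reads $\Psi \gtrsim 100$.
\item State the sharp bound $\Psi \geq 99.99$, and note that equality to $100$ holds only up to this rounding.
\end{itemize}
Either way, $\Psi > 1$, which is all the impossibility step needs.

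Finally, invoke Theorem~\ref{thm:impossibility}. Since $\Psi > 1$, it gives
\[
\sup_{\phi\in\mathcal{A}}\PPV(\phi) = \frac{\pi\Lavail}{\pi\Lavail + (1-\pi)} < 0.9999 .
\]
Hence no admissible rule attains $\PPV \geq 0.9999$.

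For concreteness, the supremum can also be bounded explicitly. The function $\pi\Lavail/(\pi\Lavail + 1-\pi)$ is increasing in both $\pi$ and $\Lavail$, so the worst case is $\pi = 0.5$, $\Lavail = 100$. There the supremum is $100/101 \approx 0.990$, well below the target.

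The fixed information set, reference class, event and window enter only to make $\Prob_0$, $\Prob_1$, $\pi$ and $\Lavail$ well defined, as in Remark~\ref{rem:event_horizon}.
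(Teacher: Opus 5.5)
Your proposal is correct and follows the paper's own proof essentially verbatim: $\pi \leq 0.50$ gives $\Lreq \geq 9{,}999$, dividing by $\Lavail \leq 100$ gives $\Psi \geq 99.99 > 1$, and Theorem~\ref{thm:impossibility} then applies. The rounding discrepancy you flag is real, since the paper's proof also establishes only $\Psi \geq 99.99$ and not the stated $\Psi \geq 100$; your extra check that $\Lavail \geq 1$ (so $\Psi$ is the ratio branch of the definition) is a small point the paper leaves implicit.
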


\begin{proof}
If $\pi \leq 0.50$, then $\Lreq \geq 9{,}999$. With $\Lavail \leq 100$, $\Psi \geq 99.99 > 1$. Theorem~\ref{thm:impossibility} applies.
\end{proof}

\noindent These tension ratios use the unconstrained ceiling. Under the coverage constraint (Definition~\ref{def:coverage_available}), pre-crisis AAA issuance rates of $q \approx 0.05$--$0.10$ would lower the effective ceiling to $\Lavail^{(q)} < \Lavail$, making the reported $\Psi$ conservative.

\subsection{Rescue conditions}

Solving $\Lreq \leq \Lavail$ for $\pi$ gives the minimum base rate for feasibility:
\begin{equation}
\label{eq:rescue}
\pi \geq \frac{\tgt}{\tgt + (1-\tgt)\Lavail}.
\end{equation}
With $\Lavail = 100$ and $\tgt = 0.9999$, this requires $\pi \geq 0.99$.

\begin{table}[H]
\centering
\caption{Minimum base rate for unconstrained feasibility at $\Lavail = 100$.}
\label{tab:rescue}
\begin{tabular}{cccc}
\toprule
\textbf{Target $\tgt$} & \textbf{Min $\pi$} & \textbf{$\Lreq$ at $\pi = 0.50$} & \textbf{$\Lreq$ at $\pi = 0.30$} \\
\midrule
0.99 & 0.50 & 99 & 231 \\
0.995 & 0.67 & 199 & 464 \\
0.999 & 0.91 & 999 & 2,331 \\
0.9999 & 0.99 & 10,000 & 23,300 \\
\bottomrule
\end{tabular}
\end{table}

If AAA means something as weak as 99\%, feasibility depends on the base rate. At the corporate-bond benchmark of 99.9\% or above, the gap is large at any plausible structured-finance base rate. Rescuing the precision claim requires either an overwhelmingly successful candidate pool or a very high discrimination ceiling. Neither condition, let alone both, finds support in the evidence.

\subsection{Revealed performance}

The FCIC estimates that approximately 83\% of mortgage-backed security tranches rated Aaa by Moody's in 2006 were subsequently downgraded; for ABS CDO tranches from the 2006--2007 vintages, more than 80\% of Aaa bonds were eventually downgraded to junk \citep{fcic2011, benmelech2009credit}. Taking a failure rate of $f = 0.90$ as an approximate benchmark for the worst-performing subprime ABS CDO segment:

\begin{definition}[Achieved discrimination]
\label{def:lach}
If the observed predictive value is $\PPV_{\mathrm{obs}} = 1 - f$, the achieved discrimination is
\[
\Lach := \frac{1-f}{f} \cdot \frac{1-\pi}{\pi}.
\]
\end{definition}

At $\pi = 0.50$, $\Lach \approx 0.11$, below 1: the AAA label was negatively informative. This retrospective quantity should not be confused with the ex ante ceiling $\Lavail$. Low achieved discrimination is consistent with low available discrimination but can also reflect misspecification or poor implementation. Its role here is evidentiary.

\begin{proposition}[Prior-Free Discrimination Deficit]
\label{prop:pi_free}
Assume $\PPV_{\mathrm{obs}} \in (0,1)$. The ratio of required to achieved discrimination satisfies
\[
\frac{\Lreq}{\Lach} = \frac{\tgt(1 - \PPV_{\mathrm{obs}})}{(1 - \tgt)\PPV_{\mathrm{obs}}},
\]
which does not depend on $\pi$.
\end{proposition}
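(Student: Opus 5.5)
The plan is a direct computation from the two closed-form expressions already in hand: $\Lreq(\tgt,\pi)$ from~\eqref{eq:lambda_req} in Theorem~\ref{thm:precision_bound}, and $\Lach$ from Definition~\ref{def:lach}. The only thing to prepare is to rewrite $\Lach$ in terms of $\PPV_{\mathrm{obs}}$ rather than $f$. Since $\PPV_{\mathrm{obs}} = 1-f$, we have
\[
\Lach = \frac{\PPV_{\mathrm{obs}}}{1-\PPV_{\mathrm{obs}}}\cdot\frac{1-\pi}{\pi}.
\]
Conceptually, this is just~\eqref{eq:ppv_formula} inverted: it is the value of $\Lambda$ that would produce predictive value $\PPV_{\mathrm{obs}}$ at base rate $\pi$.

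Next I check that everything is well defined. The hypothesis $\PPV_{\mathrm{obs}}\in(0,1)$ gives $f\in(0,1)$, so $\Lach$ is finite and strictly positive. The standing assumptions $\pi\in(0,1)$ and $\tgt\in(0,1)$ make $\Lreq$ finite and positive. Hence the quotient $\Lreq/\Lach$ makes sense.

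Then I divide. Both quantities have the form (odds of a predictive value) $\times\,(1-\pi)/\pi$:
\[
\Lreq = \frac{\tgt}{1-\tgt}\cdot\frac{1-\pi}{\pi}, \qquad \Lach = \frac{\PPV_{\mathrm{obs}}}{1-\PPV_{\mathrm{obs}}}\cdot\frac{1-\pi}{\pi}.
\]
The common factor $(1-\pi)/\pi$, which is nonzero because $\pi\in(0,1)$, cancels. This leaves
\[
\frac{\Lreq}{\Lach} = \frac{\tgt(1-\PPV_{\mathrm{obs}})}{(1-\tgt)\PPV_{\mathrm{obs}}},
\]
a ratio of posterior odds. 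The right-hand side involves only $\tgt$ and $\PPV_{\mathrm{obs}}$, which gives the claimed independence from $\pi$.

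There is no real obstacle. The one point that needs care is the interpretation: $\Lach$ must be read as defined for the same $\pi$ that enters $\Lreq$, so that the base-rate factor genuinely cancels, and the hypothesis $\PPV_{\mathrm{obs}}\in(0,1)$ is exactly what excludes division by zero. As a final sanity check, I plug in $f=0.90$ and $\tgt=0.9999$, which gives $9999\cdot 9\approx 9\times 10^4$. This matches the roughly $90{,}000$-fold shortfall quoted in the abstract.
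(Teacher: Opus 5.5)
Your proposal is correct and follows the paper's proof, which likewise just notes that $\Lreq$ and $\Lach$ share the factor $(1-\pi)/\pi$ and cancels it. Your extra checks are accurate: $\PPV_{\mathrm{obs}}\in(0,1)$ and $\pi,\tgt\in(0,1)$ make the quotient well defined, and $9999\cdot 9\approx 9\times 10^4$ matches the paper's figure.
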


\begin{proof}
Both $\Lreq$ and $\Lach$ contain the factor $(1-\pi)/\pi$, which cancels.
\end{proof}

With $\PPV_{\mathrm{obs}} \approx 0.10$ and $\tgt = 0.9999$, this ratio is approximately 90{,}000. At $\tgt = 0.999$, it is approximately 9{,}000.

\begin{remark}[Scope of the prior-free comparison]
\label{rem:pi_free_scope}
The 90{,}000 figure is exact within the fixed retrospective application in which the observed AAA tranches constitute the positively rated instruments for the stated event, horizon, and reference class. If the relevant candidate pool is broader, the figure is subset-specific.
\end{remark}

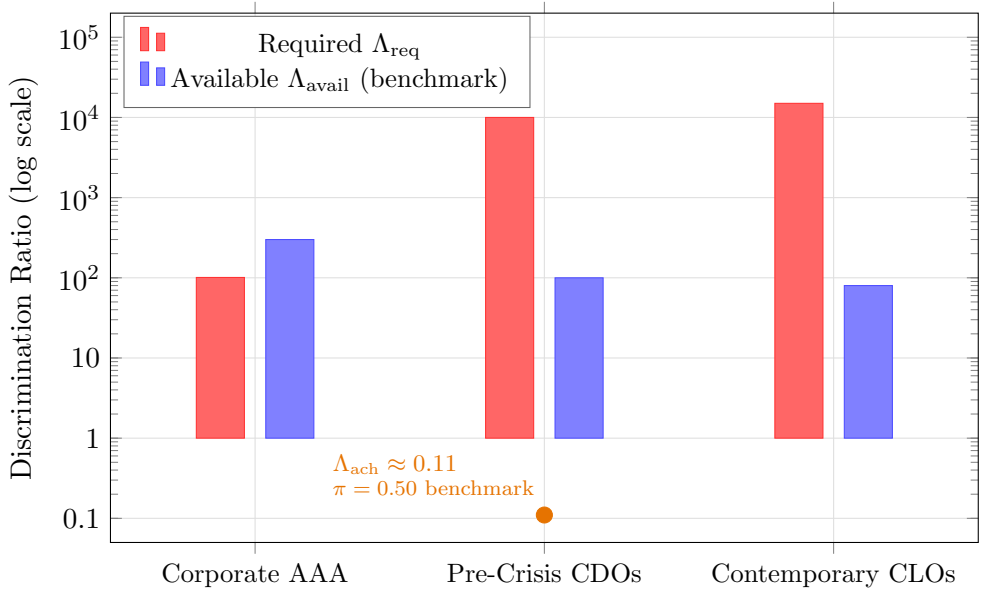
\begin{figure}[H]
\centering
\begin{tikzpicture}
\begin{axis}[
    width=13cm,
    height=8.6cm,
    ymode=log,
    ymin=0.05, ymax=200000,
    ylabel={Discrimination Ratio (log scale)},
    symbolic x coords={Corporate AAA, Pre-Crisis CDOs, Contemporary CLOs},
    xtick=data,
    xticklabel style={font=\small},
    ytick={0.1, 1, 10, 100, 1000, 10000, 100000},
    yticklabels={0.1, 1, 10, $10^2$, $10^3$, $10^4$, $10^5$},
    grid=major,
    grid style={gray!25},
    legend style={
        at={(0.015,0.995)},
        anchor=north west,
        font=\small,
        draw=black!60,
        fill=white,
        inner xsep=6pt,
        inner ysep=4pt
    },
    bar width=18pt,
    ybar=8pt,
    enlarge x limits=0.25,
    clip=false
]

\addplot[fill=red!60, draw=red!80] coordinates {
    (Corporate AAA, 101)
    (Pre-Crisis CDOs, 10000)
    (Contemporary CLOs, 15000)
};
\addlegendentry{Required $\Lreq$}

\addplot[fill=blue!50, draw=blue!70] coordinates {
    (Corporate AAA, 300)
    (Pre-Crisis CDOs, 100)
    (Contemporary CLOs, 80)
};
\addlegendentry{Available $\Lavail$ (benchmark)}

\addplot[
    only marks,
    mark=*,
    mark size=3pt,
    orange!90!black
] coordinates {(Pre-Crisis CDOs, 0.11)};

\node[
    font=\footnotesize,
    orange!90!black,
    align=left,
    anchor=east
] at (axis cs:Pre-Crisis CDOs, 0.35) {$\Lach \approx 0.11$\\[-2pt]\scriptsize $\pi=0.50$ benchmark};

\end{axis}
\end{tikzpicture}
\caption{Discrimination ratios across asset classes (log scale). Red bars: required discrimination under benchmark base-rate assumptions. Blue bars: benchmark ceilings (see Example~\ref{ex:binormal}). Orange marker: achieved discrimination for pre-crisis CDOs ($\pi = 0.50$); $\Lach < 1$ indicates the rating was negatively informative. Cross-asset comparisons are stylized: different applications use different certified events, horizons, and reference classes.}
\label{fig:discrimination_gap}
\end{figure}

\section{Plausibility Check and Contemporary Application}
\label{sec:results}

\subsection{Corporate bonds}

A useful framework should produce feasible calibrations where AAA ratings have historically performed well. For a corporate AAA-candidate reference class, take $\pi \approx 0.99$ and $\tgt = 0.9999$. Then $\Lreq \approx 101$. If available discrimination is in the low hundreds, $\Psi < 1$ and the target is attainable. These are benchmark values for a plausibility check, not event-matched estimates of corporate $\Lavail$ or $\pi$.

The differences are structural. Self-selection produces high base rates among firms seeking AAA. Creditworthiness is directly observable rather than mediated through a correlation model. The historical record spans multiple stress regimes. Defaults are relatively weakly correlated.

\begin{table}[H]
\centering
\caption{Illustrative tension-ratio determinants across asset classes. Cross-column comparisons are stylized benchmarks, not same-contract estimates.}
\label{tab:comparison}
\begin{tabular}{lccc}
\toprule
& \textbf{Corporate AAA} & \textbf{2006--07 CDOs} & \textbf{Contemp.\ CLOs} \\
\midrule
Base rate $\pi$ & $\sim 0.99$ & $\sim 0.30$--$0.50$ & $\sim 0.40$ \\
Required $\Lreq$ & $\sim 100$ & $\sim 10{,}000$--$23{,}000$ & $\sim 15{,}000$ \\
Available $\Lavail$ & $\sim 300$+ & $\sim 50$--$100$ & $\sim 50$--$100$ \\
Tension ratio $\Psi$ & $\sim 0.3$ & $\sim 100$--$470$ & $\sim 150$--$300$ \\
\bottomrule
\end{tabular}
\end{table}

\subsection{Contemporary CLOs}

No AAA-rated CLO tranche has defaulted through principal loss \citep{sfa2020, sp2024clo}. A finite zero-failure record, however, does not establish $\PPV = 1$. The tension ratio framework asks whether rating-time data could support discrimination on the order of $10^4$.

For $\pi = 0.40$ and $\tgt = 0.9999$, $\Lreq \approx 15{,}000$. With $\Lavail = 80$, $\Psi \approx 188$. This section is illustrative, not a theorem-level impossibility proof. The evidence below draws on several distinct events (principal loss, coverage-ratio deterioration, model-test failure, covenant erosion), each bearing on the discrimination question from a different angle. Four features bear on the calibration.

\paragraph{Limited stress data.} The CLO~2.0 era (2010--present) has produced zero AAA defaults under broadly favorable conditions. During 2008--2009, AAA coverage ratios declined sharply, falling from approximately 1.16 to 0.85, meaning that even the most senior rated tranches were undercollateralized in early 2009 \citep{cordell2023}. Survival reflected recovery and the closed-end CLO structure, not ex ante discrimination.

\paragraph{Model failures in real time.} During the 2020 COVID episode, S\&P's model-based test results reported in CLO trustee reports indicated that AAA-rated tranches in 12.5\% of reporting CLOs could not withstand the scenario default rate that a AAA tranche should survive \citep{griffin2023}. The agencies incorporated qualitative overrides rather than downgrade, suggesting that implemented models were not generating the discrimination a stringent target requires.

\paragraph{Sensitivity to correlation.} Typical BSL CLO AAA subordination of 35--38\% requires a large fraction of underlying loans to default before principal loss under standard recovery assumptions \citep{blackrock2025, naic2025}. At historical average default rates and low correlation, this is very unlikely. Under sustained stress with elevated correlation, the buffer could erode materially. The key parameter (stress-regime default correlation) is not identified by benign-period data.

\paragraph{Covenant erosion.} Approximately 90\% of outstanding U.S.\ leveraged loans now lack maintenance covenants, compared with approximately 20\% in 2007 \citep{naic2025}. The historical track record was compiled under collateral conditions that no longer prevail.

By Proposition~\ref{prop:engineering}, once the information set includes collateral data and static deal terms, no derived signal can raise $\Lavail$ beyond the ceiling they imply. Active management incorporating genuinely new post-rating information falls outside the proposition and requires a dynamic extension.

\begin{table}[H]
\centering
\caption{CLO stress indicators. Rows are descriptive; they do not share a single certified event or reference class.}
\label{tab:clo_stress}
\small
\begin{tabular}{p{4.5cm} p{3.5cm} p{5cm}}
\toprule
\textbf{Metric} & \textbf{Value} & \textbf{Source} \\
\midrule
AAA CLO defaults (historical) & 0 observed & \citet{sfa2020} \\
AAA coverage ratio trough (2009) & 0.85 (from 1.16) & \citet{cordell2023} \\
AAA tranches failing S\&P test (June 2020) & 12.5\% & \citet{griffin2023} \\
Typical AAA subordination (BSL CLO) & 35--38\% & \citet{blackrock2025}; \citet{naic2025} \\
Covenant-lite share of outstanding loans & $\sim$90\% (2024) & \citet{naic2025} \\
\bottomrule
\end{tabular}
\end{table}

\subsection{Spread evidence}

If AAA CLOs and AAA corporates carried comparable risk, their spreads should be broadly similar. In practice, AAA CLO spreads persistently exceed AAA corporate spreads by a substantial margin. As of January 2026, the Palmer Square AAA CLO index carried a discount margin of 98 basis points over SOFR \citep{palmersquare2026}, while the ICE BofA AAA US Corporate Index option-adjusted spread stood at approximately 34 basis points \citep{ice2026}. Earlier in the sample, the gap was wider: at the beginning of 2024, new-issue AAA CLO spreads were approximately 160 basis points over SOFR \citep{naic2025} versus a corporate AAA OAS of 38 basis points. Over the ten-year period from January 2014 through December 2024, the average corporate AAA OAS was 62 basis points. Table~\ref{tab:spreads} summarizes the comparison.

\begin{table}[H]
\centering
\caption{AAA CLO versus AAA corporate spreads (basis points). CLO spreads are discount margins over SOFR; corporate spreads are option-adjusted spreads over Treasuries. The conventions differ, so the differential is not a pure credit-risk measure.}
\label{tab:spreads}
\small
\begin{tabular}{lccc}
\toprule
\textbf{Period} & \textbf{AAA CLO} & \textbf{AAA Corporate} & \textbf{Differential} \\
\midrule
Early 2024 & $\sim$160 & 38 & $\sim$122 \\
Year-end 2024 & $\sim$125 & 33 & $\sim$92 \\
January 2026 & 98 & 34 & 64 \\
10-year average & $\sim$130 & 62 & $\sim$68 \\
\bottomrule
\end{tabular}
\end{table}

\noindent The spread conventions differ (discount margin over SOFR versus option-adjusted spread over Treasuries), so the gap is not a pure credit-risk differential. Liquidity, complexity, and investor-base effects all contribute. A persistent differential of 60--120 basis points is nonetheless inconsistent with the view that the two AAA labels represent the same tail safety.

\section{Discussion}
\label{sec:discussion}

\subsection{Limits of moment-based validation}

Pre-crisis validation checked that models matched historical moments. Theorem~\ref{thm:moment_insufficiency} in Appendix~\ref{app:moments} shows that such exercises cannot constrain the extreme-tail behavior relevant to AAA certification. Validation based on finitely many distributional summaries leaves extreme-tail behavior underdetermined.

\subsection{What ratings can defensibly claim}

A calibrated conditional claim requires specifying the target reliability, certified event, evaluation horizon, reference class, and maintained regime assumptions. Such claims differ fundamentally from unconditional tail certification.

Rather than asserting a single probability, a rating report could disclose sensitivity: at historical mean correlation, 99.99\% survival; at historical maximum, 99.50\%; at moderate stress, 92\%; at severe stress, 65\%. This would convey what an unconditional AAA label obscures.

\subsection{A proposal for tension ratio disclosure}

Regulators should require disclosure of the tension ratio for structured products receiving investment-grade ratings: the target $\tgt$, certified event and horizon, reference class, estimated $\pi$ (with sensitivity bounds), $\Lavail$ (specifying whether unconstrained or coverage-constrained), and the resulting $\Psi$.

One may object that $\Lavail$ is difficult to estimate. When the gap spans orders of magnitude, precision is unnecessary. A more serious concern is market destabilization, but the alternative (issuing ratings whose precision claims are unsupported) stores up larger risks.

\section{Conclusion}
\label{sec:conclusion}

The formal contribution is a conditional impossibility theorem: if the discrimination ceiling falls below the Bayes-implied requirement, no admissible decision rule can achieve the target precision. The empirical contribution is narrower. Required discrimination was extraordinarily large. The available public evidence provides no affirmative basis for believing the pre-crisis information environment could support discrimination close to that level, and even generous benchmark ceilings imply tension ratios in the hundreds. The achieved-versus-required gap spans roughly five orders of magnitude under four-nines and four orders under three-nines.

The framework is not uniformly pessimistic. It accommodates corporate AAA ratings, where high base rates and rich information produce low tension ratios. For contemporary CLOs, illustrative calibrations yield tension ratios well above one, and market pricing points in the same direction.

These failures cannot be understood solely as modeling mistakes or incentive failures. The precision target very likely exceeded what the available information could support.

Several extensions are natural. The framework applies to other certification contexts, including catastrophe bonds and sovereign risk. The transformation bound extends to dynamic settings with active management. Empirical work could estimate $\Lavail$ using granular loan-level data spanning both normal and stress outcomes. The interaction between informational impossibility and incentive distortions warrants formal modeling: when the target is infeasible, are incentive problems amplified or attenuated?

Some precision claims exceed what the available information can support. Sound risk management begins by recognizing that limit.

\section*{Declarations of Interest}

The authors report no conflicts of interest. The authors alone are responsible for the content and writing of the paper.

\section*{Use of Artificial Intelligence}

The authors used AI-assisted tools (Overleaf, Claude) to assist with grammar, fluency, and LaTeX formatting. No AI system contributed to the formulation of theoretical arguments, proofs, interpretations, or empirical analyses.

\appendix
\section{Moment Insufficiency}
\label{app:moments}

\begin{theorem}[Moment Insufficiency]
\label{thm:moment_insufficiency}
Let $P$ be a random variable on $[0,1]$ with first $M$ moments $\mu_1, \ldots, \mu_M$, where $\mu_1 < 1$. For every $\varepsilon > 0$, there exist distributions $D_1$ and $D_2$ on $[0,1]$ with
\[
\max_{m \leq M} |\E_{D_i}[P^m] - \mu_m| \leq \varepsilon \quad \textup{for } i = 1, 2,
\]
yet $\E_{D_2}[P^r]/\E_{D_1}[P^r] \to \infty$ as $r \to \infty$.
\end{theorem}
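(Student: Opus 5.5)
The strategy is to start from a base distribution that matches the moments and then graft a tiny mass at the endpoint $1$ onto one copy only. The moments $\mu_1,\dots,\mu_M$ are those of some distribution $D_0$ of $P$ on $[0,1]$, with $\mu_1<1$, so $D_0$ is not the point mass at $1$.

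First I will replace $D_0$ by a distribution that has no mass in a neighbourhood of $1$. Choose $\eta>0$ small and let $D_1$ be the pushforward of $D_0$ under $p\mapsto \min(p,1-\eta)$. For each $m\le M$,
\[
\bigl|\E_{D_1}[P^m]-\mu_m\bigr|\le m\,\E_{D_0}\bigl[(P-(1-\eta))^+\bigr]\le m\eta,
\]
so taking $\eta\le \varepsilon/(2M)$ makes this error at most $\varepsilon/2$. Since $D_1$ is supported in $[0,1-\eta]$, we have $\E_{D_1}[P^r]\le (1-\eta)^r$. We also have $\E_{D_1}[P^r]>0$ unless $D_1=\delta_0$; that case is handled below.

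Second, set $D_2:=(1-\delta)D_1+\delta\,\delta_{\{1\}}$ with $\delta\le \varepsilon/2$. Then
\[
\bigl|\E_{D_2}[P^m]-\E_{D_1}[P^m]\bigr|=\delta\,\bigl|1-\E_{D_1}[P^m]\bigr|\le\delta,
\]
so both $D_1$ and $D_2$ satisfy the $\varepsilon$-moment condition. For the tail moments, $\E_{D_2}[P^r]\ge\delta$ for every $r$, hence
\[
\frac{\E_{D_2}[P^r]}{\E_{D_1}[P^r]}\ge \frac{\delta}{(1-\eta)^r}\to\infty .
\]

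The main obstacle is keeping the ratio well defined when the denominator vanishes, namely when $D_0=\delta_0$ (all $\mu_m=0$). In that case I would take $D_1$ to be a point mass at a small $a>0$ with $a\le\varepsilon$, which keeps the moment errors within $\varepsilon$. The same bound $\delta/a^r\to\infty$ then applies. More generally, to avoid any degeneracy I will always mix a tiny point mass at some $a\in(0,1-\eta]$ into $D_1$ and absorb its effect into the $\varepsilon$ budget.

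The only genuinely delicate point is that the support of $D_1$ stays strictly below $1$ while the moments move continuously. The truncation step handles this, and it is the step where $\mu_1<1$ is used: it rules out the case where every candidate must sit at $1$. With the truncation in place, the remaining estimates are elementary.
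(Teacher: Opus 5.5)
Your argument is correct and is essentially the paper's. The paper pulls the base law away from $1$ by scaling, $U\mapsto(1-2\delta)U$, and grafts an atom at $1-\delta$, whereas you clip at $1-\eta$ and graft an atom at $1$; both handle the all-zero-moment case separately with point masses. Two small fixes: in that degenerate case take $a\le\varepsilon/2$ rather than $a\le\varepsilon$, since $D_2$ has moment error $(1-\delta)a^m+\delta$, which can exceed $\varepsilon$ otherwise. Also, neither construction actually uses $\mu_1<1$ (clipping works equally for $D_0=\delta_1$), so your closing remark about where that hypothesis enters is inaccurate, though harmless.
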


\begin{proof}
We may assume $\varepsilon \in (0,1)$.

\emph{Case 1: $\mu_1 = 0$.} Then $\mu_m = 0$ for all $m \geq 1$. Choose $0 < a < b < 1$ with $a \leq \varepsilon/2$ and $\eta \in (0, \varepsilon/2)$. Let $D_1 = \delta_a$ and $D_2 = (1-\eta)\delta_a + \eta\delta_b$. For every $m \leq M$, $\E_{D_1}[P^m] = a^m \leq \varepsilon/2$ and $\E_{D_2}[P^m] \leq a + \eta \leq \varepsilon$. Moreover,
\[
\frac{\E_{D_2}[P^r]}{\E_{D_1}[P^r]} = (1-\eta) + \eta\left(\frac{b}{a}\right)^r \to \infty.
\]

\emph{Case 2: $\mu_1 \in (0,1)$.} Let $Q$ be a distribution on $[0,1]$ with moments $\mu_1, \ldots, \mu_M$. Fix $\delta = \varepsilon/(4M)$ and $\eta = \varepsilon/2$. Define $D_1$ as the law of $(1-2\delta)U$ where $U \sim Q$, and $D_2 = (1-\eta)D_1 + \eta\delta_{1-\delta}$.

For each $m \leq M$, $|\E_{D_1}[P^m] - \mu_m| = \mu_m(1-(1-2\delta)^m) \leq 2M\delta = \varepsilon/2$, and $|\E_{D_2}[P^m] - \E_{D_1}[P^m]| \leq \eta = \varepsilon/2$. So both approximate the target moments within $\varepsilon$.

Since $\mu_1 > 0$, $\E_Q[U^r] > 0$ for all finite $r$. We have $\E_{D_1}[P^r] = (1-2\delta)^r \E_Q[U^r] \leq (1-2\delta)^r$ and $\E_{D_2}[P^r] \geq \eta(1-\delta)^r$. Therefore
\[
\frac{\E_{D_2}[P^r]}{\E_{D_1}[P^r]} \geq \eta\left(\frac{1-\delta}{1-2\delta}\right)^r \to \infty,
\]
since $(1-\delta)/(1-2\delta) > 1$.
\end{proof}

\newpage
\bibliographystyle{apalike}

\end{document}